\documentclass[oneside]{amsart}
\pdfoutput=1

\usepackage{tikz,tikz-3dplot}
\usetikzlibrary{positioning,arrows.meta,decorations.pathreplacing,decorations.pathmorphing,decorations.markings,intersections}
\tikzset{li/.style={line width=1pt}}

\makeatletter
 \tikzoption{canvas is plane}[]{\@setOxy#1}
 \def\@setOxy O(#1,#2,#3)x(#4,#5,#6)y(#7,#8,#9)%
   {\def\tikz@plane@origin{\pgfpointxyz{#1}{#2}{#3}}%
    \def\tikz@plane@x{\pgfpointxyz{#4}{#5}{#6}}%
    \def\tikz@plane@y{\pgfpointxyz{#7}{#8}{#9}}%
    \tikz@canvas@is@plane
   }
\makeatother

\usepackage{graphicx,amssymb,amsfonts,amsmath,amsthm,newlfont}

\newcommand{\CC}{{\mathbb C}}

\newcommand{\FF}{{\mathbb F}}

\newcommand{\RR}{{\mathbb R}}

\newcommand{\ZZ}{{\mathbb Z}}
\newcommand{\ee}{\mathrm{e}}
\newcommand{\ii}{\mathrm{i}}

\newcommand{\dd}{\mathrm{d}}
\newcommand{\Li}{\mathrm{Li}\,}

\newcommand{\pp}{{\overline{p}}}

\newcommand{\zz}{{\overline{z}}}

\newcommand{\sE}{\mathcal{E}}

\newcommand{\sV}{\mathcal{V}}

\renewcommand{\Im}{\mathop\mathrm{Im}}
\renewcommand{\Re}{\mathop\mathrm{Re}}

\theoremstyle{plain}
\newtheorem{thm}{Theorem}
\newtheorem{lem}[thm]{Lemma}

\newcommand{\thetreeX}{
\begin{minipage}{2.6cm}
\begin{tikzpicture}
      \draw [thick] (-1,0) -- (0,0);
      \draw [thick] (0,0) -- (0.5,0.866);
      \draw [thick] (0,0) -- (0.5,-0.866);
      \draw (0.7,0.866) node{${\scriptstyle 1}$};
      \draw (-1.2,0) node{${\scriptstyle 0}$};
      \draw (0.7,-0.866) node{${\scriptstyle z}$};
      \node at (0,0) [right = 0.1] {$\scriptstyle x_1$};
\end{tikzpicture}
\end{minipage}}

\newcommand{\thetreePp}{
\begin{minipage}{2.6cm}
\begin{tikzpicture}
      \draw [thick,<-] (-0.8,0.4) -- (0.2,0.95);
      \draw [thick] (-1,0) -- (0.5,0.866);
      \draw [thick] (-1,0) -- (0,0);
      \draw [thick] (0,0) -- (0.5,0.866);
      \draw [thick] (0,0) -- (0.5,-0.866);
      \draw[thick,<-, bend right=30] (0.3,-0.9) to (-0.9,-0.2);
      \draw (-0.4,0.9) node{${\scriptstyle p_2}$};
      \draw (-0.3,-0.6) node{${\scriptstyle p_1}$};
\end{tikzpicture}
\end{minipage}}

\newcommand{\thetreePx}{
\begin{minipage}{2.6cm}
\begin{tikzpicture}
      \draw [thick] (-1,0) -- (0.5,0.866);
      \draw [thick] (-1,0) -- (0,0);
      \draw [thick] (0,0) -- (0.5,0.866);
      \draw [thick] (0,0) -- (0.5,-0.866);
      \draw (0.7,0.866) node{${\scriptstyle p_1}$};
      \draw (-1.2,0) node{${\scriptstyle 0}$};
      \draw (0.7,-0.866) node{${\scriptstyle p_2}$};
\end{tikzpicture}
\end{minipage}}

\newcommand{\momentumthreepoint}{
\begin{minipage}{2.9cm}
\begin{tikzpicture}[rotate=-90]
    \def\circleradius{0.5cm}
    \def\linelength{1cm}
    \def\arcradius{0.35cm}
    \def\shrink{0.4cm} 
    \def\offset{-0.35cm} 

    \coordinate (Afull) at (120:{\circleradius+\linelength});
    \coordinate (Bfull) at (240:{\circleradius+\linelength});
    \coordinate (Cfull) at (0:{\circleradius+\linelength});
    
    \coordinate (A) at ($ (120:{\circleradius+\linelength-\shrink}) + (30:\offset) $);
    \coordinate (B) at ($ (240:{\circleradius+\linelength-\shrink}) + (330:\offset)$);
    \coordinate (C) at ($ (0:{\circleradius+\linelength-\shrink}) + (270:\offset) $);
    \coordinate (AS) at ($ (120:{\circleradius+\linelength-\shrink}) - (30:\offset) $);

    \filldraw[black!30] (0,0) circle (\circleradius);
    \draw[thick] (0,0) circle (\circleradius);
    
    \foreach \angle in {0,120,240} {
        \draw[thick] 
            (\angle:\circleradius) -- (\angle:{\circleradius + \linelength});
    }
    
    \draw[thick, ->, bend right=30] (B) to (A);

    \draw[thick, ->, bend right=30] (AS) to (C);

    \node at (0,0) {$\scriptstyle \nu_e^p$};
    \node at (C) [above right =0.5 and 0.2] {$\scriptstyle p_1$};
    \node at (A) [left = 0.4] {$\scriptstyle p_2$};
    \node at (0.8,-0.8) {$\scriptstyle G^p$};
\end{tikzpicture}
\end{minipage}}

\newcommand{\positionthreepoint}{
\begin{minipage}{2.9cm}
\begin{tikzpicture}[rotate=-90]
    \def\circleradius{0.5cm}
    \def\labels{0.8cm}

    \filldraw[black!30] (0,0) circle (\circleradius);
    \draw[thick] (0,0) circle (\circleradius);
    
    \foreach \angle in {0,120,240} {
        \filldraw
            (\angle:\circleradius) circle (0.05);
    }

    \node at (0,0) {$\scriptstyle \nu_e$};
    \node at (0:\labels) {$\scriptstyle p_2$};
    \node at (120:\labels) {$\scriptstyle 0$};
    \node at (240:\labels) {$\scriptstyle p_1$};
    \node at (0.8,0.8) {$\scriptstyle G$};
\end{tikzpicture}
\end{minipage}}

\newcommand{\momentumthreepointza}{
\begin{minipage}{2cm}
\begin{tikzpicture}[rotate=-90]
    \def\circleradius{0.5cm}
    \def\linelength{1cm}
    \def\arcradius{0.35cm}
    \def\shrink{0.4cm} 
    \def\offset{-0.35cm} 

    \coordinate (Afull) at (120:{\circleradius+\linelength});
    \coordinate (Bfull) at (240:{\circleradius+\linelength});
    \coordinate (Cfull) at (0:{\circleradius+\linelength});
    
    \coordinate (A) at ($ (120:{\circleradius+\linelength-\shrink}) + (30:\offset) $);
    \coordinate (B) at ($ (240:{\circleradius+\linelength-\shrink}) + (330:\offset)$);
    \coordinate (C) at ($ (0:{\circleradius+\linelength-\shrink}) + (270:\offset) $);
    \coordinate (AS) at ($ (120:{\circleradius+\linelength-\shrink}) - (30:\offset) $);

    \filldraw[black!30] (0,0) circle (\circleradius);
    \draw[thick] (0,0) circle (\circleradius);
    
    \foreach \angle in {0,120,240} {
        \filldraw
            (\angle:\circleradius) circle (0.05);
    }

    \draw[thick, ->, bend right=30] (B) to (A);

    \draw[thick, ->, bend right=30] (AS) to (C);

    \node at (0,0) {$G^p$};
    \node at (C) [above right =0.5 and 0.2] {$\scriptstyle z_1$};
    \node at (A) [left = 0.4] {$\scriptstyle z_2$};
\end{tikzpicture}
\end{minipage}}

\newcommand{\momentumthreepointzb}{
\begin{minipage}{2.2cm}
\begin{tikzpicture}[rotate=-90]
    \def\circleradius{0.5cm}
    \def\linelength{1cm}
    \def\arcradius{0.35cm}
    \def\shrink{0.4cm} 
    \def\offset{-0.35cm} 

    \coordinate (Afull) at (120:{\circleradius+\linelength});
    \coordinate (Bfull) at (240:{\circleradius+\linelength});
    \coordinate (Cfull) at (0:{\circleradius+\linelength});
    
    \coordinate (A) at ($ (120:{\circleradius+\linelength-\shrink}) + (30:\offset) $);
    \coordinate (B) at ($ (240:{\circleradius+\linelength-\shrink}) + (330:\offset)$);
    \coordinate (C) at ($ (0:{\circleradius+\linelength-\shrink}) + (270:\offset) $);
    \coordinate (AS) at ($ (120:{\circleradius+\linelength-\shrink}) - (30:\offset) $);

    \filldraw[black!30] (0,0) circle (\circleradius);
    \draw[thick] (0,0) circle (\circleradius);
    
    \draw[thick] (120:\circleradius) -- (120:{\circleradius + \linelength});
    
    \node at (110:{\circleradius + \linelength/2}) {$\scriptstyle \nu^p$};

    \foreach \angle in {0,240} {
        \filldraw
            (\angle:\circleradius) circle (0.05);
    }

    \draw[thick, ->, bend right=30] (B) to (A);

    \draw[thick, ->, bend right=30] (AS) to (C);

    \node at (0,0) {$G^p$};
    \node at (C) [above right =0.5 and 0.2] {$\scriptstyle z_1$};
    \node at (A) [left = 0.4] {$\scriptstyle z_2$};
\end{tikzpicture}
\end{minipage}}

\newcommand{\momentumthreepointzc}{
\begin{minipage}{2.7cm}
\begin{tikzpicture}[rotate=-90]
    \def\circleradius{0.5cm}
    \def\linelength{1cm}
    \def\arcradius{0.35cm}
    \def\shrink{0.4cm} 
    \def\offset{-0.35cm} 

    \coordinate (Afull) at (120:{\circleradius+\linelength});
    \coordinate (Bfull) at (240:{\circleradius+\linelength});
    \coordinate (Cfull) at (0:{\circleradius+\linelength});
    
    \coordinate (A) at ($ (120:{\circleradius+\linelength-\shrink}) + (30:\offset) $);
    \coordinate (B) at ($ (240:{\circleradius+\linelength-\shrink}) + (330:\offset)$);
    \coordinate (C) at ($ (0:{\circleradius+\linelength-\shrink}) + (270:\offset) $);
    \coordinate (AS) at ($ (120:{\circleradius+\linelength-\shrink}) - (30:\offset) $);

    \filldraw[black!30] (0,0) circle (\circleradius);
    \draw[thick] (0,0) circle (\circleradius);
    \draw[thick] (240:\circleradius) -- (240:{\circleradius + \linelength});
    \node at (252:{\circleradius + \linelength/2}) {$\scriptstyle \nu^p$};

    \foreach \angle in {0,120} {
        \filldraw
            (\angle:\circleradius) circle (0.05);
    }

    \draw[thick, ->, bend right=30] (B) to (A);

    \draw[thick, ->, bend right=30] (AS) to (C);

    \node at (0,0) {$G^p$};
    \node at (C) [above right =0.5 and 0.2] {$\scriptstyle z_1$};
    \node at (A) [left = 0.4] {$\scriptstyle z_2$};
\end{tikzpicture}
\end{minipage}}

\newcommand{\momentumthreepointzd}{
\begin{minipage}{2cm}
\begin{tikzpicture}[rotate=-90]
    \def\circleradius{0.5cm}
    \def\linelength{1cm}
    \def\arcradius{0.35cm}
    \def\shrink{0.4cm} 
    \def\offset{-0.35cm} 

    \coordinate (Afull) at (120:{\circleradius+\linelength});
    \coordinate (Bfull) at (240:{\circleradius+\linelength});
    \coordinate (Cfull) at (0:{\circleradius+\linelength});
    
    \coordinate (A) at ($ (120:{\circleradius+\linelength-\shrink}) + (30:\offset) $);
    \coordinate (B) at ($ (240:{\circleradius+\linelength-\shrink}) + (330:\offset)$);
    \coordinate (C) at ($ (0:{\circleradius+\linelength-\shrink}) + (270:\offset) $);
    \coordinate (AS) at ($ (120:{\circleradius+\linelength-\shrink}) - (30:\offset) $);

    \filldraw[black!30] (0,0) circle (\circleradius);
    \draw[thick] (0,0) circle (\circleradius);
    \draw[thick] (0:\circleradius) -- (0:{\circleradius + \linelength});
    \node at (-12:{\circleradius + \linelength/2}) {$\scriptstyle \nu^p$};

    \foreach \angle in {120,240} {
        \filldraw
            (\angle:\circleradius) circle (0.05);
    }

    \draw[thick, ->, bend right=30] (B) to (A);

    \draw[thick, ->, bend right=30] (AS) to (C);

    \node at (0,0) {$G^p$};
    \node at (C) [above right =0.5 and 0.2] {$\scriptstyle z_1$};
    \node at (A) [left = 0.4] {$\scriptstyle z_2$};
\end{tikzpicture}
\end{minipage}}

\newcommand{\positionthreepointzb}{
\begin{minipage}{4cm}
\begin{tikzpicture}[rotate=-90]
    \def\circleradius{0.5cm}
    \def\linelength{1cm}
    \def\labels{0.8cm}

    \coordinate (Afull) at (120:{\circleradius+\linelength});
    \coordinate (Bfull) at (240:{\circleradius+\linelength});
    \coordinate (Cfull) at (0:{\circleradius+\linelength});
    
    \filldraw[black!30] (0,0) circle (\circleradius);
    \draw[thick] (0,0) circle (\circleradius);
    \draw[thick] (120:\circleradius) -- (120:{\circleradius + \linelength});
    \draw[thick] (240:\circleradius) .. controls (280:\linelength) and (320:\linelength) .. (0:\circleradius);
    \node at (110:\circleradius + \linelength/2) {$\scriptstyle \nu$};
    \node at (300:1.23cm) {$\scriptstyle \nu-\frac{\lambda+1}\lambda$};

    \foreach \angle in {0,240} {
        \filldraw
            (\angle:\circleradius) circle (0.05);
    }
    \filldraw(120:\circleradius+\linelength) circle (0.05);

    \node at (0,0) {$G$};
    \node at (0:\labels) {$\scriptstyle z_2$};
    \node at (120:\labels + \linelength) {$\scriptstyle 0$};
    \node at (240:\labels) {$\scriptstyle z_1$};

\end{tikzpicture}
\end{minipage}}

\newcommand{\positionthreepointzc}{
\begin{minipage}{4cm}
\begin{tikzpicture}[rotate=-90]
    \def\circleradius{0.5cm}
    \def\linelength{1cm}
    \def\labels{0.8cm}

    \coordinate (Afull) at (120:{\circleradius+\linelength});
    \coordinate (Bfull) at (240:{\circleradius+\linelength});
    \coordinate (Cfull) at (0:{\circleradius+\linelength});
    
    \filldraw[black!30] (0,0) circle (\circleradius);
    \draw[thick] (0,0) circle (\circleradius);
    \draw[thick] (240:\circleradius) -- (240:{\circleradius + \linelength});
    \draw[thick] (0:\circleradius) .. controls (40:\linelength) and (80:\linelength) .. (120:\circleradius);
    \node at (250:\circleradius + \linelength/2) {$\scriptstyle \nu$};
    \node at (60:1.23cm) {$\scriptstyle \nu-\frac{\lambda+1}\lambda$};

    \foreach \angle in {0,120} {
        \filldraw
            (\angle:\circleradius) circle (0.05);
    }
    \filldraw(240:\circleradius+\linelength) circle (0.05);

    \node at (0,0) {$G$};
    \node at (0:\labels) {$\scriptstyle z_2$};
    \node at (120:\labels) {$\scriptstyle 0$};
    \node at (240:\labels + \linelength) {$\scriptstyle z_1$};

\end{tikzpicture}
\end{minipage}}

\newcommand{\positionthreepointzd}{
\begin{minipage}{2.5cm}
\begin{tikzpicture}[rotate=-90]
    \def\circleradius{0.5cm}
    \def\linelength{1cm}
    \def\labels{0.8cm}

    \coordinate (Afull) at (120:{\circleradius+\linelength});
    \coordinate (Bfull) at (240:{\circleradius+\linelength});
    \coordinate (Cfull) at (0:{\circleradius+\linelength});
    
    \filldraw[black!30] (0,0) circle (\circleradius);
    \draw[thick] (0,0) circle (\circleradius);
    \draw[thick] (0:\circleradius) -- (0:{\circleradius + \linelength});
    \draw[thick] (120:\circleradius) .. controls (160:\linelength) and (200:\linelength) .. (240:\circleradius);
    \node at (-10:\circleradius + \linelength/2) {$\scriptstyle \nu$};
    \node at (180:1cm) {$\scriptstyle \nu-\frac{\lambda+1}\lambda$};

    \foreach \angle in {120,240} {
        \filldraw
            (\angle:\circleradius) circle (0.05);
    }
    \filldraw(0:\circleradius+\linelength) circle (0.05);

    \node at (0,0) {$G$};
    \node at (0:\labels + \linelength) {$\scriptstyle z_2$};
    \node at (120:\labels) {$\scriptstyle 0$};
    \node at (240:\labels) {$\scriptstyle z_1$};

\end{tikzpicture}
\end{minipage}}

\newcommand{\graphicalfunctiona}{
\begin{minipage}{2.5cm}
\begin{tikzpicture}[rotate=-90]
    \def\circleradius{0.5cm}
    \def\linelength{1cm}
    \def\labels{0.8cm}

    \coordinate (Afull) at (120:{\circleradius+\linelength});
    \coordinate (Bfull) at (240:{\circleradius+\linelength});
    \coordinate (Cfull) at (0:{\circleradius+\linelength});
    
    \filldraw[black!30] (0,0) circle (\circleradius);
    \draw[thick] (0,0) circle (\circleradius);
    \draw[thick] (0:\circleradius) -- (0:{\circleradius + \linelength});
    \node at (-10:\circleradius + \linelength/2) {$\scriptstyle \nu$};

    \foreach \angle in {120,240} {
        \filldraw
            (\angle:\circleradius) circle (0.05);
    }
    \filldraw(0:\circleradius+\linelength) circle (0.05);

    \node at (180:\linelength) {$G$};
    \node at (16:1.65cm) {$\scriptstyle z_2=z$};
    \node at (120:\labels) {$\scriptstyle 0$};
    \node at (240:\labels+0.1cm) {$\scriptstyle z_1=1$};

\end{tikzpicture}
\end{minipage}}

\newcommand{\graphicalfunctionb}{
\begin{minipage}{4cm}
\begin{tikzpicture}[rotate=-90]
    \def\circleradius{0.5cm}
    \def\linelength{1cm}
    \def\labels{0.8cm}

    \coordinate (Afull) at (120:{\circleradius+\linelength});
    \coordinate (Bfull) at (240:{\circleradius+\linelength});
    \coordinate (Cfull) at (0:{\circleradius+\linelength});
    
    \filldraw[black!30] (0,0) circle (\circleradius);
    \draw[thick] (0,0) circle (\circleradius);
    \draw[thick] (120:\circleradius) -- (120:{\circleradius + \linelength});
    \draw[thick] (240:\circleradius) .. controls (280:\linelength) and (320:\linelength) .. (0:\circleradius);
    \draw[thick] (120:\circleradius) .. controls (160:\linelength) and (200:\linelength) .. (240:\circleradius);
    \draw[thick] (120:\circleradius+\linelength) .. controls (160:2cm) and (220:2cm) .. (240:\circleradius);
    \node at (110:\circleradius + \linelength/2) {$\scriptstyle \nu$};
    \node at (300:1.23cm) {$\scriptstyle \nu-\frac{\lambda+1}\lambda$};
    \node at (170:1cm) {$\scriptstyle \frac{\lambda+1}\lambda-N_G$};
    \node at (220:1.6cm) {$\scriptstyle N_G-\nu$};
    
    \foreach \angle in {0,240} {
        \filldraw
            (\angle:\circleradius) circle (0.05);
    }
    \filldraw(120:\circleradius+\linelength) circle (0.05);
   \filldraw(0:\circleradius+2*\linelength) circle (0.0);

    \node at (0:\labels) {$\scriptstyle z_2=z$};
    \node at (120:\labels + \linelength) {$\scriptstyle 0$};
    \node at (250:\labels+0.15cm) {$\scriptstyle z_1=1$};

\end{tikzpicture}
\end{minipage}}

\newcommand{\graphicalfunctionc}{
\begin{minipage}{4cm}
\begin{tikzpicture}[rotate=-90]
    \def\circleradius{0.5cm}
    \def\linelength{1cm}
    \def\labels{0.8cm}

    \coordinate (Afull) at (120:{\circleradius+\linelength});
    \coordinate (Bfull) at (240:{\circleradius+\linelength});
    \coordinate (Cfull) at (0:{\circleradius+\linelength});
    
    \filldraw[black!30] (0,0) circle (\circleradius);
    \draw[thick] (0,0) circle (\circleradius);
    \draw[thick] (240:\circleradius) -- (240:{\circleradius + \linelength});
    \draw[thick] (0:\circleradius) .. controls (40:\linelength) and (80:\linelength) .. (120:\circleradius);
    \draw[thick] (120:\circleradius) .. controls (160:\linelength) and (200:\linelength) .. (240:\circleradius);
    \draw[thick] (120:\circleradius) .. controls (140:2cm) and (200:2cm) .. (240:\circleradius+\linelength);
    \node at (250:\circleradius + \linelength/2) {$\scriptstyle \nu$};
    \node at (60:1.23cm) {$\scriptstyle \nu-\frac{\lambda+1}\lambda$};
    \node at (190:1cm) {$\scriptstyle \frac{\lambda+1}\lambda-N_G$};
    \node at (140:1.6cm) {$\scriptstyle N_G-\nu$};

    \foreach \angle in {0,120} {
        \filldraw
            (\angle:\circleradius) circle (0.05);
    }
    \filldraw(240:\circleradius+\linelength) circle (0.05);
   \filldraw(0:\circleradius+2*\linelength) circle (0.0);

    \node at (0:\labels) {$\scriptstyle z_2=z$};
    \node at (117:\labels) {$\scriptstyle 0$};
    \node at (240:\labels + \linelength+0.1cm) {$\scriptstyle z_1=1$};

\end{tikzpicture}
\end{minipage}}

\newcommand{\grapha}{
\begin{minipage}{2.5cm}
\begin{tikzpicture}
    \def\linelength{0.8cm}
    \coordinate (A) at (-90:\linelength);
    \coordinate (B) at (30:\linelength);
    \coordinate (C) at (150:\linelength);
    \coordinate (D) at (-0.866*\linelength,1.5*\linelength);
    \coordinate (O) at (0,0);
    
    \foreach \pos in {C,O} {
        \filldraw
            (\pos) circle (0.05);
    }
    \draw[thick] (A) -- (O) -- (B) -- (C) -- (D);
    \draw[thick] (O) -- (C);
    \filldraw (A) circle (0.05) node[anchor=north] {$\scriptstyle z$};
    \filldraw (B) circle (0.05) node[anchor=west] {$\scriptstyle 0$};
    \filldraw (D) circle (0.05) node[anchor=south] {$\scriptstyle 1$};
\end{tikzpicture}
\end{minipage}}

\newcommand{\graphb}{
\begin{minipage}{3.7cm}
\begin{tikzpicture}
    \def\linelength{0.8cm}
    \coordinate (A) at (-90:\linelength);
    \coordinate (B) at (30:\linelength);
    \coordinate (C) at (150:\linelength);
    \coordinate (D) at (-0.866*\linelength,1.5*\linelength);
    \coordinate (E) at (1.866*\linelength,0.5*\linelength);
    \coordinate (O) at (0,0);
    
    \foreach \pos in {B,C} {
        \filldraw
            (\pos) circle (0.05);
    }
    \draw[thick] (C) -- (O) -- (B) -- (C) -- (D) -- (B) -- (E);
    \filldraw (O) circle (0.05) node[anchor=north] {$\scriptstyle z$};
    \filldraw (E) circle (0.05) node[anchor=west] {$\scriptstyle 0$};
    \filldraw (D) circle (0.05) node[anchor=south] {$\scriptstyle 1$};
    \draw[thick] (O) .. controls (200:2*\linelength) and (150:2*\linelength) .. (D);
    \node at (-1.7*\linelength,\linelength) {$\scriptstyle -1$};

\end{tikzpicture}
\end{minipage}}

\newcommand{\graphc}{
\begin{minipage}{2.5cm}
\begin{tikzpicture}
    \def\linelength{0.8cm}
    \coordinate (A) at (-90:\linelength);
    \coordinate (B) at (30:\linelength);
    \coordinate (C) at (150:\linelength);
    \coordinate (D) at (-0.866*\linelength,1.5*\linelength);
    \coordinate (E) at (1.866*\linelength,0.5*\linelength);
    \coordinate (F) at (-0.866*\linelength,2.5*\linelength);
    \coordinate (O) at (0,0);
    
    \foreach \pos in {C,D} {
        \filldraw
            (\pos) circle (0.05);
    }
    \draw[thick] (O) -- (C) -- (B) -- (D) -- (C);
    \draw[thick] (D) -- (F);
    \filldraw (O) circle (0.05) node[anchor=north] {$\scriptstyle z$};
    \filldraw (B) circle (0.05) node[anchor=west] {$\scriptstyle 0$};
    \filldraw (F) circle (0.05) node[anchor=south] {$\scriptstyle 1$};

\end{tikzpicture}
\end{minipage}}

\newcommand{\graphd}{
\begin{minipage}{2cm}
\begin{tikzpicture}
    \def\linelength{0.8cm}
    \coordinate (A) at (-90:\linelength);
    \coordinate (B) at (30:\linelength);
    \coordinate (C) at (150:\linelength);
    \coordinate (D) at (-0.866*\linelength,1.5*\linelength);
    \coordinate (E) at (1.866*\linelength,0.5*\linelength);
    \coordinate (F) at (-0.866*\linelength,2.5*\linelength);
    \coordinate (G) at (90:\linelength);
    \coordinate (O) at (0,0);
    
    \foreach \pos in {A,B,C,D,G,O} {
        \filldraw
            (\pos) circle (0.05);
    }
    \draw[thick] (A) -- (O) -- (B) -- (G) -- (O) -- (C) -- (G);
    \draw[thick] (C) -- (D);

\end{tikzpicture}
\end{minipage}}

\newcommand{\graphe}{
\begin{minipage}{3.5cm}
\begin{tikzpicture}
    \def\linelength{0.8cm}
    \coordinate (A) at (-90:\linelength);
    \coordinate (B) at (30:\linelength);
    \coordinate (C) at (150:\linelength);
    \coordinate (D) at (-0.866*\linelength,1.5*\linelength);
    \coordinate (E) at (1.866*\linelength,0.5*\linelength);
    \coordinate (F) at (-0.866*\linelength,2.5*\linelength);
    \coordinate (G) at (90:\linelength);
    \coordinate (H) at (1.866*\linelength,0.5*\linelength);
    \coordinate (O) at (0,0);
    
    \foreach \pos in {B,C,D,G,O,H} {
        \filldraw
            (\pos) circle (0.05);
    }
    \draw[thick] (O) -- (B) -- (G) -- (O) -- (C) -- (G);
    \draw[thick] (C) -- (D);
    \draw[thick] (B) -- (H);

    \draw[thick] (D) .. controls (140:1.6cm) and (200:1.4cm) .. (O);
    \draw[thick] (B) .. controls (50:1.1cm) and (90:1.4cm) .. (D);

    \node at (-1.3cm,0.5cm) {$\scriptstyle -1$};

\end{tikzpicture}
\end{minipage}}

\newcommand{\graphf}{
\begin{minipage}{2.5cm}
\begin{tikzpicture}
    \def\linelength{0.8cm}
    \coordinate (A) at (-90:\linelength);
    \coordinate (B) at (30:\linelength);
    \coordinate (C) at (150:\linelength);
    \coordinate (D) at (-0.866*\linelength,1.5*\linelength);
    \coordinate (E) at (1.866*\linelength,0.5*\linelength);
    \coordinate (F) at (-0.866*\linelength,2.5*\linelength);
    \coordinate (G) at (90:\linelength);
    \coordinate (H) at (1.866*\linelength,0.5*\linelength);
    \coordinate (O) at (0,0);
    
    \foreach \pos in {B,C,D,F,G,O} {
        \filldraw
            (\pos) circle (0.05);
    }
    \draw[thick] (B) -- (G) -- (C) -- (O) -- (G);
    \draw[thick] (C) -- (D) -- (F);

    \draw[thick] (B) .. controls (50:1.1cm) and (90:1.4cm) .. (D);

\end{tikzpicture}
\end{minipage}}

\newcommand{\graphg}{
\begin{minipage}{2.5cm}
\begin{tikzpicture}
    \def\linelength{0.8cm}
    \coordinate (A) at (-90:\linelength);
    \coordinate (B) at (30:\linelength);
    \coordinate (C) at (150:\linelength);
    \coordinate (D) at (-0.866*\linelength,1.5*\linelength);
    \coordinate (E) at (1.866*\linelength,0.5*\linelength);
    \coordinate (F) at (-0.866*\linelength,2.5*\linelength);
    \coordinate (G) at (90:\linelength);
    \coordinate (H) at (1.866*\linelength,0.5*\linelength);
    \coordinate (I) at (-30:\linelength);
    \coordinate (J) at (210:\linelength);
    \coordinate (K) at (-1.866*\linelength,-\linelength);
    \coordinate (O) at (0,0);
    
    \draw[thick] (G) -- (I);
    \draw[white,line width=5pt] (O) -- (B);

    \foreach \pos in {A,O,B,C,G,I,J,K,O} {
        \filldraw
            (\pos) circle (0.05);
    }
    \draw[thick] (A) -- (I) -- (B) -- (G);
    \draw[thick] (I) -- (O) -- (A) -- (K) -- (J) -- (C) -- (G);
    \draw[thick] (K) -- (C);
    \draw[thick] (A) -- (J) -- (O) -- (B);

    \node at (-0.3cm,-1.5cm) {$P_{7,1}$};

\end{tikzpicture}
\end{minipage}}

\newcommand{\graphh}{
\begin{minipage}{2.5cm}
\begin{tikzpicture}
    \def\linelength{0.8cm}
    \coordinate (A) at (-90:\linelength);
    \coordinate (B) at (30:\linelength);
    \coordinate (C) at (150:\linelength);
    \coordinate (D) at (-0.866*\linelength,1.5*\linelength);
    \coordinate (E) at (1.866*\linelength,0.5*\linelength);
    \coordinate (F) at (-0.866*\linelength,2.5*\linelength);
    \coordinate (G) at (90:\linelength);
    \coordinate (H) at (1.866*\linelength,0.5*\linelength);
    \coordinate (I) at (-30:\linelength);
    \coordinate (J) at (210:\linelength);
    \coordinate (K) at (-1.866*\linelength,-\linelength);
    \coordinate (L) at (-1.4*\linelength,-0.2cm);
    \coordinate (M) at (-\linelength,-\linelength);
    \coordinate (O) at (0,0);
    
    \draw[thick] (G) -- (I);
    \draw[white,line width=5pt] (O) -- (B);
    \draw[thick] (A) -- (C);
    \draw[white,line width=5pt] (L) -- (O) -- (M);

    \foreach \pos in {A,O,B,C,G,I,L,M,O} {
        \filldraw
            (\pos) circle (0.05);
    }
    \draw[thick] (A) -- (I) -- (B) -- (G);
    \draw[thick] (I) -- (O) -- (A) -- (M) -- (L) -- (C) -- (G);
    \draw[thick] (M) -- (O) -- (L);
    \draw[thick] (O) -- (B);

    \node at (0cm,-1.5cm) {$P^{\mathrm{non\,}\phi^4}_{7,31}$};

\end{tikzpicture}
\end{minipage}}

\newcommand{\graphi}{
\begin{minipage}{3cm}
\begin{tikzpicture}
    \def\linelength{0.8cm}
    \coordinate (A) at (-90:\linelength);
    \coordinate (B) at (30:\linelength);
    \coordinate (C) at (150:1.2*\linelength);
    \coordinate (D) at (-0.866*\linelength,1.5*\linelength);
    \coordinate (E) at (1.866*\linelength,0.5*\linelength);
    \coordinate (F) at (-0.866*\linelength,2.5*\linelength);
    \coordinate (G) at (90:\linelength);
    \coordinate (H) at (1.866*\linelength,0.5*\linelength);
    \coordinate (I) at (-30:\linelength);
    \coordinate (J) at (210:\linelength);
    \coordinate (K) at (-1.866*\linelength,-\linelength);
    \coordinate (L) at (-1.4*\linelength,-0.2cm);
    \coordinate (M) at (-\linelength,-\linelength);
    \coordinate (O) at (0,0);
    
    \draw[thick] (G) -- (I);
    \draw[white,line width=5pt] (O) -- (B);
    \draw[thick] (M) .. controls (-2cm,-0.4cm) and (-2cm,1cm) .. (G);

    \foreach \pos in {A,O,B,C,G,I,L,M,O} {
        \filldraw
            (\pos) circle (0.05);
    }
    \draw[thick] (A) -- (I) -- (B) -- (G);
    \draw[thick] (I) -- (O) -- (A) -- (M) -- (L) -- (C) -- (G) -- (O);
    \draw[thick] (M) -- (O) -- (L);
    \draw[thick] (C) -- (O) -- (B);

    \node at (-0.2cm,0.42cm) {$\scriptstyle -1$};
    \node at (0cm,-1.5cm) {$P^{\mathrm{non\,}\phi^4}_{7,18}$};

\end{tikzpicture}
\end{minipage}}

\title{Self-duality of massless scalar three-point amplitudes}
\date{}
\author{Oliver Schnetz}
\address{Oliver Schnetz\\
II. Institut f\"ur Theoretische Physik\\
Luruper Chaussee 149\\
22761 Hamburg, Germany}
\email{schnetz@mi.uni-erlangen.de}

\begin{document}

\begin{abstract}
We prove that off-shell massless scalar three-point Feynman integrals are self-dual under Fourier transformation. This implies that a momentum space integral can be expressed
as the position space integral of the same Feynman graph with transformed edge-weights (not the dual graph) if external vertices are labeled accordingly.
In particular, any off-shell massless scalar three-point Feynman integral can be expressed as a graphical function.
The result follows immediately from a theorem by M. Golz, E. Panzer and the author on parametric representations of position space integrals (2015),
but it was only observed by X. Jiang in 2025 in the context of four-dimensional $\mathcal{N}=4$ Super-Yang-Mills theory.
We generalize Jiang's result and discuss the consequences of the self-duality in the context of graphical functions.
In particular, we derive a new identity for graphical functions and a new twist relation for scalar integrals (Feynman periods) in $\phi^4$ theory.
\end{abstract}
\maketitle

\section{Introduction}
We consider three-point functions in massless scalar (spin zero) quantum field theories (QFT). If the QFT has a three-valent vertex (like six-dimensional $\phi^3$ theory), then
the three-point function has the intrinsic meaning as a vertex correction. Otherwise, three-point functions can be used as a tool to calculate scalar quantities like renormalization
functions (the $\beta$ and $\gamma$ functions). This method was used in position space where the three-point function naturally gives rise to a function on the complex plane:
the graphical function \cite{5lphi3,gfe,recursive,gf,numfunct,7loops,6loops,gft}.

Concretely, we consider a connected graph $G$ in a massless scalar QFT in (possibly noninteger) dimension
\begin{equation}
D=2\lambda+2.
\end{equation}
The graph $G$ has three external vertices $z_0,z_1,z_2\in\RR^D$ and $|\sV_G^{\mathrm{int}}|$ internal vertices $x_1,\ldots,x_{|\sV_G^{\mathrm{int}}|}$.

The edges of $G$ have weights $\nu_e\in\RR$, so that the position space Feynman propagator of the edge $e=xy$ with vertices $x,y\in\RR^D$ has the form
\begin{equation}\label{eqpe}
p_e(x,y)=\frac1{||x-y||^{2\lambda\nu_e}}.
\end{equation}
A standard edge $e$ has weight $\nu_e=1$. The introduction of arbitrary weights gives us the freedom to consider more general configurations. For example,
we can write a double edge (a bubble) as a single edge of weight $\nu_e=2$. Integrating out two-point insertions gives rise to noninteger edge weights.

The signature of the norm $||\cdot||$ will not be important in the following, so that we restrict ourselves to Euclidean signature.
The three-point integral $A_G$ is defined by integration over the internal vertices,
\begin{equation}\label{AG}
A_G(z_0,z_1,z_2)=\int_{\RR^{D|\sV_G^{\mathrm{int}}|}}\Big(\prod_{x_i\in\sV_G^{\mathrm{int}}}\frac{\dd^Dx_i}{\pi^{D/2}}\Big)\prod_{e\in\sE_G}p_e(x,z),
\end{equation}
where the propagator $p_e$ depends on $z=(z_1,z_2,z_3)$ if $e$ connects to an external vertex.

In position space, the scaling weight of the graph $G$ (the superficial degree of divergence) is
\begin{equation}\label{eqNG}
N_G=\Big(\sum_{e\in\sE_G}\nu_e\Big)-\frac{D}{2\lambda}|\sV_G^{\mathrm{int}}|.
\end{equation}
A key feature of position space three-point integrals is that they can be expressed as a single-valued function on the punctured complex plane $\CC\backslash\{0,1\}$ \cite{par},

\begin{equation}\label{AGf}
A_G(z_0,z_1,z_2)=||z_1-z_0||^{-2\lambda N_G}f_G^{(\lambda)}(z),
\end{equation}
where $z$ and its complex conjugate $\zz$ relate to the invariants $||z_{ij}||^2=||z_i-z_j||^2$ according to
\begin{equation}\label{inv3}
\frac{||z_{20}||^2}{||z_{10}||^2}=z\zz,\qquad \frac{||z_{21}||^2}{||z_{10}||^2}=(z-1)(\zz-1).
\end{equation}
The first nontrivial example of a graphical function is the three-star
\begin{equation}\label{3star}
\thetreeX=\frac{4\ii D(z)}{z-\zz},
\end{equation}
where $D(z)$ is the Bloch-Wigner dilogarithm \cite{Zagierdilog},
\begin{equation}\label{BWLi}
D(z)=\mathrm{Im}\,(\Li_2(z)+\log(1-z)\log|z|).
\end{equation}
Note that in (\ref{3star}) we used the labels $0,1,z\in\CC$ for the external vertices $z_0,z_1,z_2\in\RR^D$, respectively. This correspondence is implied by (\ref{inv3}) and makes
the natural embedding of $\CC$ into $\RR^D$ explicit (see Figure \ref{fig:gfC}).

\begin{figure}
\tdplotsetmaincoords{80}{120}
\centering
\begin{tikzpicture}[tdplot_main_coords,scale=.7]
\draw[thin, ->,black!50] (0,0,0) -- (4.5,0,0) node[anchor=south east,opacity=1]{$x^D$};

\draw[thin, ->,black!50] (0,0,0) -- (0,4.5,0)  node[anchor=south west,opacity=1]{$x^1$};

\draw[thin, ->,black!50] (0,0,0) -- (0,0,4.5)  node[anchor=south west,opacity=1]{$x^2$};

\tdplotsetrotatedcoords{0}{90}{0};
\draw[dotted,black!50,tdplot_rotated_coords] (0,.8,0) arc (90:180:.8);

\pgfmathsetmacro{\Ox}{32}
\pgfmathsetmacro{\Oy}{14}
\pgfmathsetmacro{\Oz}{3}
\pgfmathsetmacro{\Onex}{-5}
\pgfmathsetmacro{\Oney}{3}
\pgfmathsetmacro{\Onez}{1}
\pgfmathsetmacro{\Zx}{-12}
\pgfmathsetmacro{\Zy}{1}
\pgfmathsetmacro{\Zz}{6}

\tdplotcrossprod(\Onex,\Oney,\Onez)(\Zx,\Zy,\Zz)
\pgfmathsetmacro{\rx}{\tdplotresx}
\pgfmathsetmacro{\ry}{\tdplotresy}
\pgfmathsetmacro{\rz}{\tdplotresz}
\tdplotcrossprod(\rx,\ry,\rz)(\Onex,\Oney,\Onez)
\pgfmathsetmacro{\tx}{\tdplotresx/100}
\pgfmathsetmacro{\ty}{\tdplotresy/100}
\pgfmathsetmacro{\tz}{\tdplotresz/100}

\pgfmathsetmacro{\dplane}{\rx*\Ox + \ry*\Oy + \rz*\Oz}

\pgfmathsetmacro{\OneLen}{sqrt(\Onex*\Onex+\Oney*\Oney+\Onez*\Onez)};
\pgfmathsetmacro{\iLen}{sqrt(\tx*\tx+\ty*\ty+\tz*\tz)};

\tikzset{perspective/.style= {canvas is plane={O(0,0,0)x(\Onex/\OneLen,\Oney/\OneLen,\Onez/\OneLen)y(\tx/\iLen,\ty/\iLen,\tz/\iLen)}} }
\pgfmathsetmacro{\axisscale}{2};
\tikzset{perspective2/.style= {canvas is plane={O(0,0,0)x(\axisscale*\Onex/\OneLen,\axisscale*\Oney/\OneLen,\axisscale*\Onez/\OneLen)y(\axisscale*\tx/\iLen,\axisscale*\ty/\iLen,\axisscale*\tz/\iLen)}} }

\pgfmathsetmacro{\axisovershoot}{2};
\pgfmathsetmacro{\axisundershoot}{.5};

\coordinate (v0) at (\Ox,\Oy,\Oz);
\coordinate (v1) at ([shift={(\Onex,\Oney,\Onez)}]v0);
\coordinate (vz) at ([shift={(\Zx,\Zy,\Zz)}]v0);
\coordinate (vi) at ([shift={(\tx,\ty,\tz)}]v0);

\filldraw[black!50] (0,0,0) circle (1.3pt);

\draw[dashed,-{Stealth[length=10pt, width=15pt]},black!50] (0,0,0) -- node[inner sep=1.5pt,below right] {$z_0$} (v0);
\draw[dashed,-{Stealth[length=10pt, width=15pt]},black!50] (0,0,0) -- node[inner sep=1.5pt,below left] {$z_1$} (v1);
\draw[dashed,-{Stealth[length=10pt, width=15pt]},black!50] (0,0,0) -- node[inner sep=1.5pt,below right] {$z_2$} (vz);
\draw[thin,->] ($(v0)!-\axisundershoot/\OneLen!(v1)$) -- ($(v0)!{1+2.7*\axisovershoot/\OneLen}!(v1)$) node[perspective,anchor=north west,opacity=1]{$\Re z$};
\draw[thin,->] ($(v0)!-\axisundershoot/\iLen!(vi)$) -- ($(v0)!{\OneLen/\iLen+1.4*\axisovershoot/\iLen}!(vi)$) node[perspective,anchor=south west,opacity=1]{$\Im z$};

\coordinate (C1) at ($(v0)!{2*\OneLen/\iLen}!(vi)$);
\node[perspective2] (C) at ($(C1)!{.5}!(v1)$) {$\CC$};

\pgfmathsetmacro{\xscale}{1.7};
\coordinate (R0) at (${1 + 2.1*\axisundershoot/\OneLen + 2*\axisundershoot/\iLen}*(v0) - 2.1*\axisundershoot/\OneLen*(v1)- 2*\axisundershoot/\iLen*(vi)$);
\coordinate (R1) at (${ - \xscale*2*\axisovershoot/\OneLen + 2*\axisundershoot/\iLen}*(v0) + {1 + 2*\xscale*\axisovershoot/\OneLen}*(v1)- 2*\axisundershoot/\iLen*(vi)$);
\coordinate (R2) at (${ - \OneLen/\iLen - 2*\xscale*\axisovershoot/\OneLen - 2*\axisovershoot/\iLen}*(v0) + {1 + 2*\xscale*\axisovershoot/\OneLen}*(v1)+ {\OneLen/\iLen + 2*\axisovershoot/\iLen}*(vi)$);
\coordinate (R3) at (${1 - \OneLen/\iLen + 2.1*\axisundershoot/\OneLen - 2*\axisovershoot/\iLen}*(v0) - {2.1*\axisundershoot/\OneLen}*(v1)+ {\OneLen/\iLen + 2*\axisovershoot/\iLen}*(vi)$);

\draw[opacity = .05,fill,black!50]  (R0) -- (R1) -- (R2) -- (R3);

\filldraw (v0) circle(1.3pt) node[below left,perspective] {$0$};
\filldraw (v1) circle(1.3pt) node[below,perspective] {$1$};
\filldraw (vz) circle(1.3pt) node[above right,perspective] {$z$};

\draw[thick] (v0) -- node[perspective,below]{$1$} (v1);
\draw[thick] (v0) -- node[perspective,above left]{$|z|$} (vz);
\draw[thick] (vz) -- node[perspective,below right]{$|z-1|$} (v1);

\end{tikzpicture}
\caption{The three external vertices $z_0,z_1,z_2$ span the complex plane (picture by M. Borinsky).}
\label{fig:gfC}
\end{figure}

In the context of four-dimensional $\mathcal{N}=4$ Super-Yang-Mills (SYM) theory,
one encounters convergent four-point functions that are conformal in the sense that every internal vertex has weighted degree
\begin{equation}\label{Nv}
N_{x_i}^{\mathrm{conf}}=\sum_{e\sim x_i}\nu_e=D/\lambda,
\end{equation}
where the sum is over all edges $e=x_iy$ that are adjacent to the vertex $x_i$ (see e.g.\ \cite{SYM}). Whenever one has conformal symmetry, inversion of all coordinates can be
used to eliminate one external vertex. So, conformal four-point functions can also be expressed in terms of graphical functions. Explicitly, for the integral of a conformal graph
$G^{\mathrm{conf}}$ we obtain \cite{5twist},
\begin{align}\label{AGfc}
&A_{G^{\mathrm{conf}}}(z_0,z_1,z_2,z_3)\\\nonumber
&\quad=||z_{10}||^{\lambda(-N_0-N_1-N_2+N_3)}||z_{30}||^{\lambda(-N_0+N_1+N_2-N_3)}||z_{31}||^{\lambda(N_0-N_1+N_2-N_3)}||z_{32}||^{-2\lambda N_2}
f_{G^{\mathrm{conf}}\backslash\{z_3\}}^{(\lambda)}(z),
\end{align}
where $N_i=N_{z_i}^{\mathrm{conf}}$ for the external vertices $z_0,z_1,z_2,z_3$, and $z\in\CC$ corresponds to the invariants $||z_{ij}||^2$ by
\begin{equation}\label{inv4}
\frac{||z_{20}||^2||z_{31}||^2}{||z_{10}||^2||z_{32}||^2}=z\zz,\qquad \frac{||z_{21}||^2||z_{30}||^2}{||z_{10}||^2||z_{32}||^2}=(z-1)(\zz-1).
\end{equation}
Note that the constraint (\ref{Nv}) is only relevant in the context of $\mathcal{N}=4$ SYM theory and will not be assumed in the following.

The connection to the physically more relevant momentum space is by Fourier transforming the propagator $p_e$ ($\Gamma(x)=\int_0^\infty t^{x-1}\ee^{-t}\dd t$ is the gamma function)
\begin{equation}\label{Fourier1}
\frac1{||p||^{2\nu_e^p}}=\frac{\Gamma(\lambda\nu_e)}{2^{2\nu_e^p}\Gamma(\nu_e^p)}\int\frac{\dd^D x}{\pi^{D/2}}\frac{e^{\ii x\cdot p}}{||x||^{2\lambda\nu_e}}.
\end{equation}
The connection between the momentum space weight $\nu_e^p$ and the position weight space $\nu_e$ is given by
\begin{equation}\label{nunu}
\nu_e=\frac{\lambda+1-\nu_e^p}\lambda.
\end{equation}
Note that $\nu_e^p=1$ leads to $\nu_e=1$ in four dimensions ($\lambda=1$).

\begin{figure}
\momentumthreepoint\qquad\positionthreepoint
\caption{The momentum routing of a three-point integral $A_G^p(p_1,p_2)$ and the corresponding position space integral $A_G(0,p_1,p_2)$.
Note that the weights are connected via $\lambda\nu_e=\lambda+1-\nu_e^p$; see (\ref{nunu}). Except for the weights, the graphs $G^p$ and $G$ are identical.}
\label{fig:3pt}
\end{figure}

For the momentum space three-point integral of the graph $G^p$ in Figure \ref{fig:3pt} we obtain (see e.g.\ \cite{IZ})
\begin{equation}\label{AGAG}
A_G^p(p_1,p_2)=\frac{\Gamma(N_G^p)\prod_e\Gamma(\lambda\nu_e)}{\Gamma(\lambda N_G)\prod_e\Gamma(\nu_e^p)}\widehat{A_G}(0,p_1,p_2),
\end{equation}
where
\begin{equation}\label{Fourier}
\widehat{A_G}(0,p_1,p_2)=\int_{\RR^{2D}}\frac{\dd^D x_1\dd^D x_2}{(2\pi)^D}\ee^{\ii x_1\cdot p_2-\ii x_2\cdot p_1}A_G(0,x_1,x_2)
\end{equation}
is the Fourier transform of the three-point integral $A_G(0,x_1,x_2)$.
Moreover, in (\ref{AGAG}) we used the momentum space weight
\begin{equation}\label{eqNGp}
N_G^p=\Big(\sum_{e\in G}\nu_e^p\Big)-\frac D2h_G,
\end{equation}
where $h_G$ is the first Betti number (the number of independent cycles or the loop order) of the graph $G$.
The negative sign in the exponent on the right hand side of (\ref{Fourier}) stems from the orientation of the external momenta in Figure \ref{fig:3pt}.

Note that by graph homology, the number of vertices of $G$ and $h_G$ is connected to the number of edges of $G$ by
\begin{equation}\label{graphhom}
|\sV_G^{\mathrm{int}}|+3+h_G=|\sE_G|+1.
\end{equation}
From (\ref{nunu}) we obtain
\begin{equation}\label{NGNG}
N_G^p=\Big(\sum_{e\in G}\frac D2-\lambda\nu_e\Big)-\frac D2(|\sE_G|-|\sV_G^{\mathrm{int}}|-2)=D-\lambda N_G.
\end{equation}

It was proved in \cite{par} that the graphical function $f_G^{(\lambda)}(z)$ is invariant under taking the external dual of the graph $G$
if the weights are transformed by (\ref{nunu}) and $\lambda N_G=D/2$. This translates to the fact that the position space three-point function $A_G(0,z_1,z_2)$ is
invariant under the Fourier transform (\ref{Fourier}), $\widehat{A_G}=A_G$, if $G$ is planar.

Recently, it was noticed in the context of SYM theory by X. Jiang in \cite{Jiang} that in four dimensions the momentum space in integral
(\ref{AGAG}) can directly be expressed as position space integral, implying $\widehat{A_G}=A_G$ in four dimensions and if $N_G^p=D/2$.
(Jiang labels external vertices differently. The transition from his labels to the labels in Figure \ref{fig:3pt} is a relation for three-point integrals.)

Here, we show that in any (possibly noninteger) dimension and for any graph $G$ with $\lambda N_G=N_G^p=D/2$ (both conditions are equivalent by (\ref{NGNG})),
we have
\begin{equation}\label{AGhatAG}
\widehat{A_G}=A_G,\qquad \widehat{A_G^p}=A_G^p
\end{equation}
where, by (\ref{AGAG}), self-duality of the momentum space integral (the second equation) is equivalent to self-duality in position space.
To prove (\ref{AGhatAG}), we only need that $A_G$ can be expressed in terms of the invariants (\ref{inv3}).

Self-duality implies that any scalar massless momentum space three-point integral is directly (with no Fourier transformation) given by the graphical function
of the same graph with weights transformed according to (\ref{nunu})
\begin{equation}\label{AGpf}
A_G^p(p_1,p_2)=\frac{\prod_e\Gamma(\lambda\nu_e)}{\prod_e\Gamma(\nu_e^p)}\frac{f_G^{(\lambda)}(p)}{||p_1||^D},\quad\text{if}\;N_G^p=D/2.
\end{equation}
To derive the above equation we substituted (\ref{AGhatAG}) and (\ref{AGf}) into (\ref{AGAG}) where $\Gamma(N_G^p)/\Gamma(\lambda N_G)=1$ by (\ref{NGNG}).
The relation (\ref{inv3}) between invariants and $p\in\CC$ becomes
\begin{equation}\label{inv3p}
\frac{||p_2||^2}{||p_1||^2}=p\pp,\qquad \frac{||p_2-p_1||^2}{||p_1||^2}=(p-1)(\pp-1).
\end{equation}

Note that (\ref{AGpf}) extends to $N_G^p\neq D/2$ if one adds to $G^p$ an external leg of appropriate weight.
If this leg is attached to the vertex where $p_1$ is outgoing, then it gives a trivial factor $||p_1||^{-2\nu_e^p}$.
Consider, for example, a triangle of weight 1 edges in momentum space. This one loop diagram has $N_G^p=1$ in four dimensions.
With an extra leg of weight 1 attached, we obtain a graph $G'$ with $N_{G'}^p=2$. Hence,
\begin{equation}\label{ex2}
\thetreePp=\quad\thetreePx=\frac{4\ii D(p)}{||p_1||^4(p-\pp)}.
\end{equation}
For the triangle with $N_G^p=1$, we get $A_G^p(p_1,p_2)=4\ii D(p)/(||p_1||^2(p-\pp))$.

The article is organized as follows.
In Section \ref{sect2}, we prove (\ref{AGhatAG}) in integer dimensions as a lemma on Fourier transformation.

In Section \ref{sect3}, we switch to parametric space allowing noninteger dimensions and show that the result follows from Corollary 1.8 in \cite{par}.
We also give an alternative parametric representation for the graphical function of a graph $G$ that only uses graph polynomials.

In Section \ref{sect4}, we show that (\ref{AGpf}) implies a new identity for graphical functions.

Finally, in Section \ref{sect5}, we switch to $\phi^4$ theory and show that the new identity for graphical functions gives a new, powerful identity for scalar integrals
($\phi^4$ periods).

\section*{Acknowledgements}
The author is supported by the DFG-grant SCHN 1240/3-1.

\section{Integer dimensions}\label{sect2}
In this Section we prove (\ref{AGhatAG}) in integer dimensions $D\geq1$. To do this, we only use that
the Feynman integral $A_G$ can be expressed in terms of the invariants $||z_{20}||^2/||z_{10}||^2$ and $||z_{21}||^2/||z_{10}||^2$; see (\ref{AGf}) and (\ref{inv3}).
Note that in a scalar theory the Fourier transform is an involution, $\widehat{\widehat{A_G}}=A_G$.

\begin{lem}\label{lem1}
Let $1\leq D\in\ZZ$ and let $G$ be a connected weighted graph with three external vertices $0,z_1,z_2$ and $|\sV_G^{\mathrm{int}}|$ internal vertices
such that the position space Feynman integral $A_G(0,z_1,z_2)$ in (\ref{AG}) exists.
If $N_G=D/2$ (see (\ref{eqNG})), then $A_G(0,z_1,z_2)$ is invariant under the Fourier transform (\ref{Fourier}),
\begin{equation}\label{Fourier2}
A_G(0,z_1,z_2)=\int_{\RR^{2D}}\frac{\dd^D x_1\dd^D x_2}{(2\pi)^D}\ee^{\ii x_1\cdot z_2-\ii x_2\cdot z_1}A_G(0,x_1,x_2).
\end{equation}
\end{lem}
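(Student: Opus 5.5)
The plan is to write $A_G(0,z_1,z_2)$ as a superposition of Gaussians in the pair $(z_1,z_2)$, Fourier transform each Gaussian explicitly, and then check that the transformed superposition coincides with the original. The check will use nothing but homogeneity. Write $Z=(z_1,z_2)\in\RR^{D\times 2}$ and let $M=Z^TZ$ be its $2\times2$ Gram matrix. By (\ref{AGf}) and (\ref{inv3}), $A_G(0,z_1,z_2)$ is a function $A(M)$ of $M$ alone. It is homogeneous of degree $-D$ in $Z$ (scaling weight $2\lambda N_G=D$, i.e.\ the normalization of the statement), hence of degree $-D/2$ in $M$.

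\emph{Step 1 (Laplace representation).} I would Schwinger-parametrize every propagator (\ref{eqpe}) and integrate out the internal vertices by Gaussian integration. This gives
\begin{equation*}
A(M)=\int_{Q>0}\mu(Q)\,\ee^{-\mathrm{tr}(QM)}\,\dd Q ,
\end{equation*}
where $Q$ runs over positive definite symmetric $2\times 2$ matrices and $\mu$ is the pushforward of the parametric measure. Since the Laplace transform on the cone of positive definite matrices is injective, homogeneity of $A$ of degree $-D/2$ forces $\mu(tQ)=t^{D/2-3}\mu(Q)$ for $t>0$; here $3$ is the dimension of the cone.

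\emph{Step 2 (Fourier transform of one Gaussian).} The exponent $x_1\cdot z_2-x_2\cdot z_1$ in (\ref{Fourier2}) is the symplectic pairing $\mathrm{tr}(X J Z^T)$ with $J=\left(\begin{smallmatrix}0&1\\-1&0\end{smallmatrix}\right)$. Completing the square, the Fourier transform of $\ee^{-\mathrm{tr}(QX^TX)}$ is, up to a constant $c_D$ fixed by the normalization $(2\pi)^{-D}$,
\begin{equation*}
c_D\,(\det Q)^{-D/2}\exp\!\big(-\tfrac14\mathrm{tr}(Q^{-1}J^TMJ)\big)
= c_D\,(\det Q)^{-D/2}\exp\!\big(-\mathrm{tr}(Q'M)\big),\qquad Q'=\frac{JQ^{-1}J^T}{4}.
\end{equation*}
The key $2\times2$ identity is $JQ^{-1}J^T=Q/\det Q$. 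Hence $Q'=Q/(4\det Q)$ is simply a positive rescaling of $Q$, and the map $Q\mapsto Q'$ is an involution of the cone that is homogeneous of degree $-1$.

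\emph{Step 3 (change of variables).} Substituting $Q=Q'/(4\det Q')$, so that $\det Q=1/(16\det Q')$, I would collect the factors in the transformed integral:
\begin{itemize}
\item the Jacobian contributes const$\cdot(\det Q')^{-3}$;
\item the Gaussian prefactor contributes $(16\det Q')^{D/2}$;
\item homogeneity of $\mu$ gives $\mu(Q)=(4\det Q')^{3-D/2}\mu(Q')$.
\end{itemize}
The powers of $\det Q'$ cancel exactly, because the exponent is $-3+D/2+3-D/2=0$; this is where $N_G=D/2$ is used. The result is $\widehat A(M)=C\int\mu(Q')\ee^{-\mathrm{tr}(Q'M)}\dd Q'=C\,A(M)$ with a pure number $C$. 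Since the Fourier transform is an involution in a scalar theory, $C^2=1$. I would then fix $C=1$ either by tracking the powers of $2$ and $\pi$ explicitly, or by evaluating on a single test case. One such case is the factorized example $\|z_1\|^{-D/2}\|z_2\|^{-D/2}$, which is self-dual under the normalization (\ref{Fourier1}).

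\emph{Main obstacle.} The delicate step is analytic rather than algebraic. $A_G$ is not integrable at the critical degree $-D$ on $\RR^{2D}$, so (\ref{Fourier2}) must be read in the sense of tempered distributions. Exchanging the $Q$-integral with the Fourier transform requires justification. I would handle this by regularizing: shift the degree slightly (or insert a factor $(\det Q)^{\epsilon}$), run Steps 1--3 with a $\Gamma$-function ratio in place of $C$, and let $\epsilon\to0$. Existence of $A_G$ guarantees that $\mu$ is locally integrable with the needed decay at the boundary of the cone.
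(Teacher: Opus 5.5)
Your argument is correct for positive weights, but it takes a genuinely different route from the paper.

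\textbf{The paper's route.} It uses nothing about $G$ except the form $A_G(0,x_1,x_2)=\|x_1\|^{-D}f(x_2^2/x_1^2,(x_2-x_1)^2/x_1^2)$. It rescales $x_2\to\|x_1\|x_2$ in the inner integral and uses rotation invariance to swap $\hat x_1$ and $\hat z_1$, so the phase becomes $-\ii\|z_1\|\,x_2\cdot x_1$. The $x_1$-integral then collapses to $\delta^D(z_2-\|z_1\|x_2)$, which simply evaluates $f$. This shows that every rotation-invariant function of degree $-D$ on $\RR^{2D}$ is self-dual, with no graph structure and no sign condition on the weights.

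\textbf{Your route.} You write $A_G$ as a positive superposition of Gaussians, which exposes the mechanism. By $JQ^{-1}J^T=Q/\det Q$, the transform acts on the cone as the ray-wise inversion $Q\mapsto Q/(4\det Q)$. Degree-$D/2$ homogeneity of $\mu$ is exactly what makes this inversion preserve the measure. The constants $2^{-D}$ from the Gaussian and $4^{D/2}$ from the substitution give $C=1$ directly, so you need neither the involution argument nor the test case.

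\textbf{What it costs.} Step~1 fails beyond positive weights: Schwinger parametrization needs $\lambda\nu_e>0$ on every edge. The paper allows $\nu_e\in\RR$, and Section~\ref{sect5} uses weight $-1$ edges. As written you cover only positive weights, and you would need an analytic continuation in the weights (or the paper's argument) for the rest.

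\textbf{What it buys.} The gain is rigor. Since $\mu\geq0$, and the Gaussians $g_Q=\ee^{-\mathrm{tr}(QX^TX)}$ and their transforms are positive, Tonelli justifies every interchange once $A_G$ is locally integrable. Concretely, pair with Schwartz functions and use that $\int\widehat{g_Q}\,\dd\mu(Q)=A$ pointwise. This is cleaner than the paper's infinitesimal Gaussian regulator and makes your $\epsilon$-regularization unnecessary.

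\textbf{Two minor points.}
\begin{enumerate}
\item Homogeneity of $\mu$ follows directly from rescaling the Schwinger parameters. You need not invoke injectivity of the matrix Laplace transform, which is problematic for $D=1$, where $M$ has rank one.
\item $\mu$ need not have a density on the three-dimensional cone. For example, if $0$ separates $z_1$ from $z_2$, then $Q$ is diagonal. Step~3 is therefore safer done ray by ray: write $Q=r\hat Q$ and check that $r\mapsto 1/(4r\det\hat Q)$ pushes $(\det Q)^{-D/2}r^{D/2-1}\dd r$ forward to $4^{D/2}r^{D/2-1}\dd r$.
\end{enumerate}
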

\begin{proof}
Up to a scale, we can express $A_G(0,x_1,x_2)$ in terms of the invariants $x_2^2/x_1^2$ and $(x_2-x_1)^2/x_1^2$
(see (\ref{AGf}) and (\ref{inv3}) for $x_0=0$),
$$
A_G(0,x_1,x_2)=\frac1{||x_1||^D}f\Big(\frac{x_2^2}{x_1^2},\frac{(x_2-x_1)^2}{x_1^2}\Big),
$$
for some function $f$. We first consider the integral over $x_2$,
$$
I_1=\int_{\RR^D}\frac{\dd^D x_2}{||x_1||^D}\ee^{-\ii x_2\cdot z_1}f\Big(\frac{x_2^2}{x_1^2},\frac{x_2^2-2x_2\cdot x_1+x_1^2}{x_1^2}\Big).
$$
We use polar decompositions for $x_1$ and $z_1$, $x_1=||x_1||\hat x_1$, $z_1=||z_1||\hat z_1$, where $\hat x_1$ and $\hat z_1$ are unit vectors.
By rotational symmetry, the integral $I_1$ only depends on $||x_1||$, $||z_1||$ and the scalar product $\hat x_1\cdot\hat z_1$.
In particular, $I_1$ is invariant under exchange of $\hat x_1$ and $\hat z_1$. Changing the integration variable $x_2$ to $||x_1||x_2$ yields
$$
I_1=\int_{\RR^D}\dd^D x_2\ee^{-\ii ||x_1||x_2\cdot \hat x_1||z_1||}f(x_2^2,x_2^2-2x_2\cdot\hat z_1+1).
$$
Note that the function $f$ does not depend on $x_1$ and that the exponent simplifies to $-\ii ||z_1||x_2\cdot x_1$.
By definition of the Fourier transform, the integrand has an infinitesimal Gaussian suppression factor $\ee^{-\sigma(x_1^1+x_2^2)}$ in the limit $\sigma\to0$.
With this convention, the integral is absolutely convergent and we may swap the order of the integrations. The integration over $x_1$ gives a $\delta$ function,
$$
\int_{\RR^D}\frac{\dd^D x_1}{(2\pi)^D}I_1\ee^{\ii x_1\cdot z_2}=\int_{\RR^D}\dd^D x_2\delta^D(z_2-||z_1||x_2)f(x_2^2,x_2^2-2x_2\cdot\hat z_1+1).
$$
The integral over $x_2$ gives $x_2=z_2/||z_1||$, yielding
$$
\frac1{||z_1||^D}f\Big(\frac{z_2^2}{z_1^2},\frac{z_2^2-2z_2\cdot z_1+z_1^2}{z_1^2}\Big)=A_G(0,z_1,z_2).
$$
\end{proof}

The first identity in (\ref{AGhatAG}) is proved in the Lemma. The second identity follows because, by (\ref{NGNG}), the momentum space integral $A_G^p(p_1,p_2)$ has the same scaling
weight as $A_G(0,z_1,z_2)$. Note that the factor of $\lambda$ in $\lambda N_G$ comes from the convention to extract $\lambda$ from the exponent in propagators (so that weight
one in position space corresponds to weight one in momentum space).

Alternatively, one can derive the second identity in (\ref{AGhatAG}) from the first identity by Fourier transforming (\ref{AGAG}).

Self-duality (\ref{AGhatAG}) is special to massless three-point functions. It does not generalize to higher $n$-point functions or to theories with particle masses.

\section{Parametric representation}\label{sect3}
It is a classic result in QFT that Feynman integrals can be expressed in parametric space (Schwinger or Feynman parameters) where the integration is over positive real variables
$\alpha_e$ that correspond to the edges $e\in\sE_G$ of a connected graph $G$; see e.g.\ \cite{IZ}. We define the graph (Kirchhoff, first Symanzik) polynomial \cite{KIR}
\begin{equation}\label{Psi}
\Psi_G(\alpha)=\sum_{\mathrm{span}\,T}\prod_{e\notin T}\alpha_e,
\end{equation}
where the sum is over spanning trees in $G$. The second Symanzik polynomial carries the kinematic structure (we only consider the massless case). It is a sum over spanning two-forests,
\begin{equation}\label{F}
F_G(\alpha)=\sum_{\mathrm{span}\,T_1\cup T_2}p(T_1)^2\prod_{e\notin T_1\cup T_2}\alpha_e,
\end{equation}
where
\begin{equation}\label{p}
p(T_1)=\sum_{v\in T_1^{\mathrm{ext}}}p(v)=-p(T_2).
\end{equation}
Here, the sum is over the vertices of $T_1$ that are external in $G$ and $p(v)$ is the ingoing momentum at $v$. By momentum conservation, the total ingoing momentum is zero,
hence $p(T_1)=-p(T_2)$.

With these polynomials, we can write the Feynman integral $A_G^p(p_1,\ldots,p_n)$ as projective integral over the positive coordinate simplex $\Delta=\{\alpha_e>0\}$,
\begin{equation}\label{AGp}
A_G^p(p_1,\ldots,p_n)=\frac{\Gamma(N_G^p)}{\prod_{e\in\sE_G}\Gamma(\nu_e^p)}\int_\Delta\Omega\frac{\prod_{e\in\sE_G}\alpha_e^{\nu_e^p-1}}{\Psi_G^{D/2-N_G^p}F_G^{N_G^p}}.
\end{equation}
The integration measure $\Omega$ is the projective volume form
\begin{equation}\label{Omega}
\Omega=\sum_{e\in\sE_G}(-1)^e\alpha_e\dd\alpha_1\wedge\ldots\wedge\dd\alpha_{e-1}\wedge\dd\alpha_{e+1}\wedge\ldots\wedge\dd\alpha_{|\sE_G|},
\end{equation}
where we assume that the edges are labeled from $1$ to $|\sE_G|$.

Note that in parametric representation, the dimension $D$ is a parameter in the exponents of the denominator. This makes it possible to generalize Feynman integrals
to any noninteger $D$ for which the integral (\ref{AGp}) exists. By analytic continuation, $A_G^p$ can be extended to a meromorphic function in $D$ on the complex plane.

In practice, one transfers to affine space by setting one $\alpha_e=1$ (or, less efficiently, $\sum_e\alpha_e=1$) yielding the integration
$\int_0^\infty\cdots\int_0^\infty\cdots\dd\alpha_1\wedge\ldots\wedge\dd\alpha_{e-1}\wedge\dd\alpha_{e+1}\wedge\ldots\wedge\dd\alpha_{|\sE_G|}$.
The sign $(-1)^e$ is absorbed by the orientation of the hyperplane $\{\alpha_e=1\}$.
For small to medium loop orders one can use the parametric representation to explicitly perform one integral after the other \cite{BogInt,BInt,HyperFORM,HyperInt}.
For higher loop orders, parametric integration becomes very slow, producing a large number of terms in intermediate steps. In the setup of massless three-point functions
(or conformal four-point functions), the method of graphical functions is much more efficient.

In \cite{par}, the analogous result in position space was derived. We need the following set of polynomials
\begin{equation}\label{Psi1}
\Psi_G^{1,\ldots,|\sV_G^{\mathrm{ext}}|}(\alpha)=\sum_{\mathrm{span}\,F}\prod_{e\notin F}\alpha_e,
\end{equation}
where the sum is over spanning $|\sV_G^{\mathrm{ext}}|$-forests in which every tree has exactly one external vertex, and
\begin{equation}\label{Psi2}
\Psi_G^{ij,\{k\neq ij\}}(\alpha)=\sum_{\mathrm{span}\,F}\prod_{e\notin F}\alpha_e,
\end{equation}
where the sum is over spanning $|\sV_G^{\mathrm{ext}}|-1$-forests in which one tree has the external vertices $i,j$ and every other tree has exactly one of the other external vertices.
From the second set of polynomials we construct the polynomial $\Phi$ that contains the position space invariants $||z_{ij}||^2$,
\begin{equation}\label{Phi}
\Phi_G(\alpha)=\sum_{i>j}||z_{ij}||^2\Psi_G^{ij,\{k\neq ij\}}(\alpha).
\end{equation}
For the three-star on the left hand side of (\ref{3star}), for example, we get
\begin{equation}\label{exPsi}
\Psi_G^{0,1,z}(\alpha)=\alpha_1\alpha_2+\alpha_1\alpha_3+\alpha_2\alpha_3,\quad\Phi_G(\alpha)=||z_{21}||^2\alpha_1+||z_{20}||^2\alpha_2+||z_{10}||^2\alpha_3,
\end{equation}
where we labeled the edges attached to $0,1,z$ by $1,2,3$.

If all edge weights $\nu_e>0$, we obtain for the Feynman integral $A_G(z_0,\ldots,z_{|\sV_G^{\mathrm{ext}}|-1})$ (see Corollary 1.8 in \cite{par})
\begin{equation}\label{AGz}
A_G(z_0,\ldots,z_{|\sV_G^{\mathrm{ext}}|-1})=\frac{\Gamma(\lambda N_G)}{\prod_{e\in\sE_G}\Gamma(\lambda\nu_e)}\int_\Delta\Omega\frac{\prod_{e\in\sE_G}\alpha_e^{\lambda(1-\nu_e)}}
{(\Psi_G^{1,\ldots,|\sV_G^{\mathrm{ext}}|})^{D/2-\lambda N_G}\Phi_G^{\lambda N_G}}.
\end{equation}

If we specialize to three external vertices $z_0,z_1,z_2$, the polynomials $\Psi_G^{10,2}$, $\Psi_G^{20,1}$, $\Psi_G^{21,0}$ are given by sums over two-forests.
With the identification of external vertices with momenta in Figure \ref{fig:3pt}, we obtain
\begin{equation}\label{PhiF}
\Phi_G=F_G.
\end{equation}
Moreover, with (\ref{nunu}) and (\ref{NGNG}) we obtain
\begin{equation}\label{AGz1}
A_G(0,p_1,p_2)=\frac{\Gamma(D-N_G^p)}{\prod_{e\in\sE_G}\Gamma(\lambda\nu_e)}\int_\Delta\Omega\frac{\prod_{e\in\sE_G}\alpha_e^{\nu_e^p-1}}
{(\Psi_G^{0,p_1,p_2})^{N_G^p-D/2}F_G^{D-N_G^p}}.
\end{equation}
If $N_G^p=D/2$, the polynomials $\Psi_G$ in (\ref{AGp}) and $\Psi_G^{0,p_1,p_2}$ in (\ref{AGz1}) drop and we obtain
\begin{equation}\label{AGz2}
\frac{\prod_{e\in\sE_G}\Gamma(\lambda\nu_e)}{\Gamma(D/2)}A_G(0,p_1,p_2)=\frac{\prod_{e\in\sE_G}\Gamma(\nu_e^p)}{\Gamma(D/2)}A_G^p(p_1,p_2).
\end{equation}
Equation (\ref{AGpf}) follows.

We close this section by expressing the polynomials $\Psi_G^{z_0,z_1,z_2}$ and $\Phi$ (see (\ref{Psi1}) and (\ref{Phi})) in terms of graph polynomials (\ref{Psi}).
The advantage of such a representation is that graph polynomials are well studied in literature; see e.g.\ \cite{BrH2,Sc2}.

To do this, we reduce the graph $G$ to the graph $G_{012}=G(z_0=z_1=z_2)$ by identifying all external vertices.
If we identify the edges in $G_{012}$ with the edges in $G$ we observe that
\begin{equation}\label{PsiPsi}
\Psi_G^{z_0,z_1,z_2}=\Psi_{G_{012}}
\end{equation}
because the spanning three-forests of $G$ are in one to one correspondence with the spanning trees of $G_{012}$.
This construction generalizes to any number of external vertices \cite{par}.

Likewise, for $i,j\in\{0,1,2\}$ we define $G_{ij}=G(z_i=z_j)$ as $G$ with vertices $z_i$ and $z_j$ identified.
So, $G_{ij}$ has the two external vertices $z_i=z_j$ and $z_k$, where $\{i,j,k\}=\{0,1,2\}$. Again, we identify the edges of $G_{ij}$ with the edges of $G$.
By correspondence of spanning two-forests with spanning trees, we obtain
\begin{equation}\label{PsiPsi2}
\Psi_G^{ik,j}+\Psi_G^{jk,i}=\Psi_{G_{ij}}.
\end{equation}
The above set of equations inverts to (this is only possible for three external vertices)
\begin{equation}\label{PsiPsi3}
\Psi_G^{ij,k}=\frac{\Psi_{G_{ik}}+\Psi_{G_{jk}}-\Psi_{G_{ij}}}2.
\end{equation}
Substitution into (\ref{Phi}) yields
\begin{align}\label{Phi1}
\Phi(\alpha)&=(z_1-z_0)^2\frac{\Psi_{G_{20}}+\Psi_{G_{21}}-\Psi_{G_{10}}}2+\text{two cyclic terms}\\\nonumber
&=\frac{(z_2-z_0)^2+(z_2-z_1)^2-(z_1-z_0)^2}2\Psi_{G_{10}}+\text{two cyclic terms}\\\nonumber
&=(z_2-z_1)\cdot(z_2-z_0)\Psi_{G_{10}}+(z_1-z_2)\cdot(z_1-z_0)\Psi_{G_{20}}+(z_0-z_2)\cdot(z_0-z_1)\Psi_{G_{21}}\\\nonumber
&=\Big(\frac{z(\zz-1)+\zz(z-1)}2\Psi_{G_{10}}+\frac{2-z-\zz}2\Psi_{G_{20}}+\frac{z+\zz}2\Psi_{G_{21}}\Big)||z_{10}||^2.
\end{align}
In the last equation, we used the identification (\ref{inv3}) of invariants with expressions in $z\in\CC$.

\section{A new identity for graphical functions}\label{sect4}
In this section, we utilize the equivalence of momentum space three-point functions and position space three-point functions to derive
a new identity for graphical functions.

An external leg of weight $\nu^p$ attached to a graph $G^p$ gives a factor of $||p||^{-2\nu^p}$ where $p$ is the momentum that goes through the leg.
When we label the external momenta by $z_1$ and $z_2$ (because we want to switch to position space later), then we obtain

\begin{equation}\label{mom3pt}
\momentumthreepointza=||z_1||^{2\nu^p}\momentumthreepointzd=||z_2-z_1||^{2\nu^p}\momentumthreepointzb=||z_2||^{2\nu^p}\momentumthreepointzc.
\end{equation}

If $G^p$ on the left hand side of (\ref{mom3pt}) has weight $N_G^p=D/2-\nu^p$, then we can use the correspondence between momentum space and position space
in (\ref{AGAG}) and (\ref{AGhatAG}) to translate the last two identities into relations between position space integrals.
Note that the $\Gamma$ factors in (\ref{AGAG}) are identical for each graph, so that they can be dropped.
The weights transform according to (\ref{nunu}); the weight $\nu^p$, in particular, transforms to $\nu$. The factors $||\bullet||^{2\nu^p}$ in (\ref{mom3pt}) translate to
edges between the corresponding external vertices of weight $-\nu^p/\lambda=\nu-(\lambda+1)/\lambda$.
By (\ref{NGNG}), the duals of the three right hand graphs in (\ref{mom3pt}) have weight $(\lambda+1)/\lambda$
(note that the edge of weight $\nu$ attached to $G$ creates an extra internal vertex).
If we also include the external edge of weight $\nu-(\lambda+1)/\lambda$, the total weight becomes $N_G^{\mathrm{tot}}=\nu$. We obtain

\begin{equation}\label{pos3pt}
\positionthreepointzd=\positionthreepointzb=\positionthreepointzc\text{if }N_G^{\mathrm{tot}}=\nu.
\end{equation}
Note that the graph $G$ (the gray disc) is {\em not} rotated in the above equations. It stays fixed while the external structures rotate.

To lift the restriction $N_G^{\mathrm{tot}}=\nu$, we consider the leftmost graph in (\ref{pos3pt})
without the external edge between $z_1$ and $0$ but with the edge attached to $z_2$ (the leftmost graph in (\ref{gfid})) and call this graph $G$ (with weight $N_G$).
To use (\ref{pos3pt}), we add two edges between $z_1$ and $0$, one with weight $(\lambda+1)/\lambda-N_G$ and one with weight $\nu-(\lambda+1)/\lambda$, to obtain the graph $G'$.
The total weight of the two edges is $\nu-N_G$, so that $N_{G'}=\nu$. We can apply (\ref{pos3pt}) to $G'$ if we consider the edge between $z_1$ and $0$
with weight $(\lambda+1)/\lambda-N_G$ as a part of the gray disc. Finally, we multiply the result with $||z_1||^{2\lambda(\nu-N_G)}$, so that on the left hand side we are back to graph $G$.
This gives
\begin{equation}\label{gfid}
\graphicalfunctiona=\graphicalfunctionb=\graphicalfunctionc.
\end{equation}
With (\ref{inv3}) (where $z_0=0$) we can identify $z_1,z_2$ with $1,z\in\CC$ and obtain an identity for graphical functions.
However, Equations (\ref{gfid}) are stronger when considered as identities for three-point functions: they can be applied to subgraphs in larger ambient graphs
(note that all three graphs have total weight $N_G$).
We will use this property in the next section to derive a new identity for scalar integrals in four-dimensional $\phi^4$ theory.

If, in the context of graphical functions, $\nu=1-k/\lambda$, with $0\leq k\in\ZZ$, then $f_G^{(\lambda)}(z)$ can be constructed from the graph $G$ with the edge at $z$
contracted (from the gray disc). In this case, Equations (\ref{gfid}) are not useful for the calculation of $f_G^{(\lambda)}(z)$.
In noninteger dimensions, however, there exist graphs with any weight $\nu$, where (\ref{gfid}) can be useful to calculate $f_G^{(\lambda)}(z)$.
Moreover, in any dimension, one can try to simplify a graphical function $f_G^{(\lambda)}(z)$ by substituting (\ref{gfid}) in a subgraph of $G$.

\section{A new identity for scalar integrals in $\phi^4$ theory}\label{sect5}

In this section, we consider (convergent) scalar integrals in massless $\phi^4$ theory, where all edges have weight $\nu_e=1$ and $D=4$, $\lambda=1$.
These scalar integrals (in a mathematical context called $\phi^4$ periods) are renormalization scheme independent contributions to the beta function arising
from vertex integrals with no sub-divergence. The corresponding graphs are called primitive (see e.g.\ \cite{Census}).
Alternatively, one can consider scalar integrals as massless $p$-integrals at unit momentum.

In parametric space, the scalar integral of a primitive $\phi^4$ graph $G$ can be calculated with the graph polynomial, see (\ref{Psi}),
\begin{equation}\label{para4d}
P_G=\int_\Delta\frac{\Omega}{\Psi_G(\alpha)^2}.
\end{equation}

In $\phi^4$ theory, there exists one primitive graph at one loop, no primitive graph at two loops, and
one primitive graph at three loops; see Figure \ref{fig:bubbleK4}.
The scalar integrals of the bubble and the tetrahedron are $1$ and $6\zeta(3)=6\sum_{k=1}^\infty k^{-3}$, respectively.

\begin{figure}
\begin{tikzpicture}[scale=0.6]
\begin{scope}[local bounding box=bubble]
	\coordinate (A) at (2,0.5);
	\coordinate (B) at (-2,0.5);
	\coordinate (C) at (-2,-0.5);
	\coordinate (D) at (2,-0.5);
    \draw[li,name path=li1] (A) .. controls (1,-1) and (-1,-1) .. (B);
    \draw[li,name path=li2] (C) .. controls (-1,1) and (1,1) .. (D);
    \fill[name intersections={of=li1 and li2}]
        (intersection-1) circle (3pt)
        (intersection-2) circle (3pt);
    \node[below=0.79 of bubble] {bubble};
\end{scope}

\begin{scope}[xshift=200,local bounding box=tetra]
	\coordinate (A) at (1.5,0);
	\coordinate (B) at (0,1.5);
	\coordinate (C) at (-1.5,0);
	\coordinate (D) at (0,-1.5);
	\coordinate (O) at (0,0);
    \draw[li] (A) -- (C);
	\draw[white,line width=8pt] (B) -- (D);
    \draw[li] (O) circle (1.5);
    \draw[li] (A) -- (2,0);
    \draw[li] (B) -- (0,2);
    \draw[li] (C) -- (-2,0);
    \draw[li] (D) -- (0,-2);
    \draw[li] (B) -- (D);
    \fill (A) circle (3pt);
    \fill (B) circle (3pt);
    \fill (C) circle (3pt);
    \fill (D) circle (3pt);
    \node[below=0.2 of tetra] {tetrahedron};
\end{scope}
\end{tikzpicture}
\caption{The bubble and the tetrahedron are the smallest primitive graphs in $\phi^4$ theory.}
\label{fig:bubbleK4}
\end{figure}

First systematic calculations of scalar integrals in $\phi^4$ theory were done with exact numerical methods by D. Broadhurst and D. Kreimer \cite{BK}.
The method was extended by the author in \cite{Census}.

With the theory of graphical functions \cite{gfe,gf,Shlog}, it was possible to extend the data in \cite{BK,Census} to hundreds of graphs up to loop order
eleven (and beyond \cite{ZZ}). In six-dimensional $\phi^3$ theory, results exist up to loop order nine. With this data, a connection to motivic Galois theory became visible \cite{coaction}
which led to further investigations of the motivic structure of QFTs \cite{Bcoact1,Bcoact2} (the `cosmic' Galois group) and of the geometries that underlie the number content
of $\phi^4$ periods \cite{BSmod,SchnetzFq,Sc2}.

It is convenient to use conformal symmetry and complete the primitive vertex graph $G$ by joining the four external legs to an extra vertex `$\infty$'.
The completion $\overline{G}$ of $G$ is a four-regular graph (every vertex has four edges) that serves as an equivalence class of decompleted graphs which all have the same
scalar integral \cite{Census}. Completion is the conformal analogon to the (projective) cut and glue identities for $p$-integrals that allows one to close an edge and open a
different edge.

We define
\begin{equation}\label{PG}
P_{\overline{G}}\equiv P_{\overline{G}\backslash\{v\}}
\end{equation}
for any vertex $v$ in $\overline{G}$.
Completion significantly reduces the number of graphs. A list of all completed primitive graph up to eight (decompleted) loops is in \cite{Census}.
Graphs up to eleven loops together with all known results are contained in the Maple package \cite{Shlog}.

Different completed graphs can have equal integrals. Some identities of such type are known: the twist \cite{gfe,gf}, the Fourier identity \cite{BK,gf},
the Fourier-split \cite{Furtherphi4}, and the five-twist \cite{5twist}.
However, even after many years, not all $\phi^4$ identities are known.

A practical solution comes from the combinatorial Hepp bound that is conjectured to be a graph invariant which detects equal scalar integrals \cite{EPHepp}.
More recently, the Martin sequence emerged as a tool that can unify all combinatorial invariants and that can also be used to determine the scalar integral \cite{PTalk,PYMartin}.

These recent developments support the picture that there exist many more identities for $\phi^4$ periods than those that can be explained by known transformations.
The first examples are at loop order eight, where it is conjectured that
\begin{equation}\label{conjid}
P_{8,30}=P_{8,36}\qquad\text{and}\qquad P_{8,31}=P_{8,35}
\end{equation}
in the notation of \cite{Census}.
Note that the graphs of the periods $P_{8,35}$ and $P_{8,36}$ do not transform under known identities.

Like every identity for graphical functions, Equations (\ref{gfid}) give identities for the scalar integral $P_G$ in the case that $G$ has a three-vertex split.
This follows from the property of the scalar integral that one can choose three vertices $0,1,z$ in $G$ and calculate $P_G$ from the graphical function of $G$ with external vertices
$0,1,z$ by integrating $z$ over the complex plane \cite{gf}.

The smallest example of (\ref{gfid}) in $\phi^4$ theory is
\begin{equation}
\grapha=\quad\graphb\quad=\quad\graphc,
\end{equation}
where in the middle graph one edge has weight $-1$ (as indicated) while all other edges have weight $1$.
The first and the last graph are identical. The middle graph can be obtained by a decompleted twist identity \cite{gf}.
In this case, Equations (\ref{gfid}) give no new results.

An example that is not given by the twist is
\begin{equation}
\graphd=\quad\graphe=\quad\graphf.
\end{equation}
If this identity is inserted into the graph $P_{7,1}$ (on the left hand side), we obtain
\begin{equation}
\graphg\quad=\graphi\quad=\quad\graphh.
\end{equation}
While the identity $P_{7,1}=P^{\mathrm{non\,}\phi^4}_{7,31}$ can also be obtained with the twist at different vertices, the identity $P_{7,1}=P^{\mathrm{non\,}\phi^4}_{7,18}$
is new. Note that, like every other identity on scalar integrals, the new identity does not change the loop order and hence generates a finite group action on all graphs with
convergent integral at a given loop order.

In general, the new identity is somewhat similar to the twist. Up to loop order eight, however, it is even a bit more powerful: the new identity generates 18 relations
that are not twists while only four twists are not also obtained with the new identity.

Further restricting to transformations that map $\phi^4$ graphs to $\phi^4$ graphs, we (e.g.) get the following identities at eleven loops that are not twists
\begin{equation}\label{Pids}
P_{11,2857}=P_{11,2868},\quad P_{11,2860}=P_{11,2863},\quad P_{11,3058}=P_{11,3061}.
\end{equation}
All three identities, however, follow from twists via the intermediate graphs $P_{11,2863},P_{11,2857},P_{11,3057}$, respectively. This means that
twists of $P_{11,2863}$ (e.g.) give $P_{11,2863}=P_{11,2857}$ and $P_{11,2863}=P_{11,2868}$ implying the left identity in (\ref{Pids}). We also conclude that $P_{11,2857}=P_{11,2860}$.

Altogether, the application of the new identity inside $\phi^4$ theory gives no new identities up to loop order eleven.
However, the new identity is more prone to provide identities between $\phi^4$ graphs and graphs outside $\phi^4$ theory, so that it should be significantly
more powerful in chains of transformations on all ($\phi^4$ and non $\phi^4$) scalar integrals (see \cite{5twist} for a more detail discussion in a similar situation).

Regretfully, even with such chains, the new identity does not explain the conjectured identities (\ref{conjid}).

The new identity can also be used (instead of the twist) in the setup of the five-twist \cite{5twist}, potentially leading to another new identity.
Experiments, however, have not produced any new results in $\phi^4$ theory.


\begin{thebibliography}{99}
\bibitem{BogInt} {\bf C. Bogner}, {\it MPL -- A program for computations with iterated integrals on moduli spaces of curves of genus zero}, Comput.\ Phys.\ Commun.\ 203, 339-353 (2016).
\bibitem{5lphi3} {\bf M. Borinsky, J.A. Gracey, M.V. Kompaniets, O. Schnetz}, {\it Five loop renormalization of $\phi^3$ theory with applications to the Lee-Yang edge
singularity and percolation theory}, Phys.\ Rev.\ D 103, 116024 (2021).
\bibitem{gfe} {\bf M. Borinsky, O. Schnetz}, {\it Graphical functions in even dimensions}, Comm.\ in Number Theory and Physics 16, No.\ 3, 515-614 (2022).
\bibitem{recursive} {\bf M. Borinsky, O. Schnetz}, {\it Recursive computation of Feynman periods}, JHEP 22(8), 291 (2022).
\bibitem{BK} {\bf D.J. Broadhurst, D. Kreimer}, {\it Knots and numbers in $\phi^4$ theory to 7 loops and beyond}, Int. J. Mod.\ Phys.\ C 6, 519 (1995).
\bibitem{BInt} {\bf F.C.S. Brown}, {\it The Massless Higher-Loop Two-Point Function}, Commun.\ Math.\ Phys.\ 287, 925-958 (2009).
\bibitem{BrH2} {\bf F.C.S. Brown}, {\it On the periods of some Feynman integrals}, arXiv:0910.0114 [math.AG] (2009).
\bibitem{Bcoact1} {\bf F.C.S. Brown}, {\it Feynman amplitudes, coaction principle, and cosmic Galois group}, Comm.\ in Number Theory and Physics 11, No.\ 3, 453-555 (2017).
\bibitem{Bcoact2} {\bf F.C.S. Brown}, {\it Notes on motivic periods}, Comm.\ in Number Theory and Physics 11, No.\ 3, 557-655 (2017).
\bibitem{BSmod} {\bf F.C.S. Brown, O. Schnetz}, {\it Modular forms in quantum field theory}, Comm.\ in Number Theory and Physics 7, No.\ 2, 293-325 (2013).
\bibitem{ZZ} {\bf F.C.S. Brown, O. Schnetz}, {\it Single-valued multiple polylogarithms and a proof of the zig-zag conjecture}, Jour.\ of Numb.\ Theory 148, 478-506 (2015).
\bibitem{SYM} {\bf J.M. Drummond, C. Duhr, P. Heslop, J. Pennington, V.A. Smirnov}, {\it Leading singularities and off-shell conformal integrals}, JHEP 8, 133-190 (2013).
\bibitem{par} {\bf M. Golz, E. Panzer, O. Schnetz}, {\it Graphical functions in parametric space}, Lett.\ Math.\ Phys.\ 107, No.\ 6, 1177-1182 (2017).
\bibitem{Furtherphi4} {\bf S. Hu, O. Schnetz, J. Shaw, K.A. Yeats}, {\it Further investigations into the graph theory of $\phi^4$-periods and the $c_2$ invariant},
Ann.\ Inst.\ H. Poincare D, Comb.\ Phys.\ Interact. 9, No.\ 3, 473-524 (2022).
\bibitem{IZ} {\bf J.C. Itzykson, J.B. Zuber}, {\it Quantum Field Theory}, Mc-Graw-Hill, (1980).
\bibitem{Jiang} {\bf X. Jiang} {\it Notes on Fourier transform and its application to three-point momentum-space integrals}, Phys.\ Rev.\ D 113, 036012 (2026).
\bibitem{HyperFORM} {\bf A. Kardos, S.-O. Moch, O. Schnetz} {\it HyperFORM -- a FORM package for parametric integration with hyperlogarithms}, :arXiv:2511.19992 [hep-ph] (2025).
\bibitem{KIR} {\bf G. Kirchhoff}, {\it Ueber die Aufl\"osung der Gleichungen, auf welche man bei der Untersuchung der linearen Vertheilung galvanischer Str\"ome gef\"uhrt wird},
Annalen der Physik und Chemie 72, No.\ 12, 497-508 (1847).
\bibitem{HyperInt} {\bf E. Panzer} {\it Algorithms for the symbolic integration of hyperlogarithms with applications to Feynman integrals}, Computer Physics Communications 188, 148-166 (2015).
\bibitem{EPHepp} {\bf E. Panzer}, {\it Hepp's bound for Feynman graphs and matroids}, Ann.\ Inst.\ H. Poincare D Comb.\ Phys.\ Interact.\ 10, No.\ 1, 31-119 (2022).
\bibitem{PTalk} {\bf E. Panzer}, {\it Combinatorial Feynman integrals and Ap\'ery}, talk presented at the `Workshop on combinatorics and algebraic geometry in QFT',
MPI, Bonn, Germany, 21st Aug.\ 2024.
\bibitem{coaction} {\bf E. Panzer, O. Schnetz}, {\it The Galois coaction on $\phi^4$ periods}, Comm.\ in Number Theory and Physics 11, No.\ 3, 657-705 (2017).
\bibitem{PYMartin} {\bf E. Panzer, K.A. Yeats} {\it Feynman symmetries of the Martin and $c_2$ invariants of regular graphs}, Comb.\ Theory 5, No.\ 1, \#10 (2025).
\bibitem{Census} {\bf O. Schnetz}, {\it Quantum periods: A census of $\phi^4$ transcendentals}, Comm.\ Number Theory and Physics 4, no.\ 1, 1-48 (2010).
\bibitem{SchnetzFq} {\bf O. Schnetz}, {\it Quantum field theory over $\FF_q$}, Electron.\ J. Comb.\ 18N1:P102 (2011).
\bibitem{gf} {\bf O. Schnetz}, {\it Graphical functions and single-valued multiple polylogarithms}, Comm.\ in Number Theory and Physics 8, No.\ 4, 589–675 (2014).
\bibitem{numfunct} {\bf O. Schnetz}, {\it Numbers and Functions in Quantum Field Theory}, Phys.\ Rev.\ D 97, 085018 (2018).
\bibitem{Sc2} {\bf O. Schnetz}, {\it Geometries in perturbative quantum field theory}, Comm.\ in Number Theory and Physics 15, No.\ 4, 743 – 791 (2021).
\bibitem{7loops} {\bf O. Schnetz}, {\it $\phi^4$ theory at seven loops}, Phys.\ Rev.\ D107, 036002 (2023).
\bibitem{6loops} {\bf O. Schnetz}, {\it $\phi^3$ theory at six loops}, Phys.\ Rev.\ D112, 016028 (2025).
\bibitem{gft} {\bf O. Schnetz}, {\it Graphical functions with spin}, JHEP No.\ 6, 53 (2025).
\bibitem{5twist} {\bf O. Schnetz}, {\it The five-twist identity for Feynman periods}, submitted to Comm.\ in Number Theory and Physics, arXiv:2505.02578 [hep.th] (2025).
\bibitem{Shlog} {\bf O. Schnetz}, {\tt HyperlogProcedures}, V0.8, Maple package {\tt https://github.com/oliverschnetz/HyperlogProcedures} (2025).
\bibitem{Zagierdilog} {\bf D. Zagier}, {\it The dilogarithm function}, Frontiers in Number Theory, Physics, and Geometry II, 3-65 (2007).
\end{thebibliography}
\end{document}